\begin{document}

\title{Delay Analysis for Wireless D2D Caching with Inter-cluster Cooperation}

% author names and affiliations
% use a multiple column layout for up to three different
\author{\IEEEauthorblockN{Ramy Amer\IEEEauthorrefmark{1},			
M. Majid Butt\IEEEauthorrefmark{1},
Mehdi Bennis\IEEEauthorrefmark{2}\IEEEauthorrefmark{4}, and
Nicola Marchetti\IEEEauthorrefmark{1}~\IEEEmembership{Fellow,~IEEE}}
\IEEEauthorblockA{\IEEEauthorrefmark{1}CONNECT, Trinity College, University of Dublin, Ireland}
\IEEEauthorblockA{\IEEEauthorrefmark{2}Centre for Wireless Communications, University of Oulu, Finland}% <-this % stops an unwanted space
\IEEEauthorblockA{\IEEEauthorrefmark{4}Department of Computer Engineering, Kyung Hee University, South Korea}
\IEEEauthorblockA{email: \{ramyr, majid.butt, nicola.marchetti\}@tcd.ie, bennis@ee.oulu.fi}
\thanks{Manuscript received December 1, 2012; revised August 26, 2015.
Corresponding author: M. Shell (email: http://www.michaelshell.org/contact.html).}}

\maketitle			

\begin{abstract}

Proactive wireless caching and D2D communication have emerged as promising techniques for enhancing users' quality of service and network performance. In this paper, we propose a new architecture for D2D caching with inter-cluster cooperation. We study a cellular network in which users cache popular files and share them with other users either in their proximity via D2D communication or with remote users using cellular transmission. We characterize the network average delay per request from a queuing perspective. Specifically, we formulate the delay minimization problem and show that it is NP-hard. Furthermore, we prove that the delay minimization problem is equivalent to minimization of a non-increasing monotone supermodular function subject to a partition matroid constraint. A computationally efficient greedy algorithm is proposed which is proven to be locally optimal within a factor 2 of the optimum. Simulation results show more than 45\% delay reduction compared to a D2D caching system without inter-cluster cooperation.

\begin{IEEEkeywords}
D2D caching, queuing analysis, delay analysis.	
\end{IEEEkeywords}

\end{abstract}

\IEEEpeerreviewmaketitle

\section{Introduction}
% no \IEEEPARstart
The rapid proliferation of mobile devices has led to an unprecedented growth in wireless traffic demands. A typical approach to deal with such demand is by densifying the network. For example, macrocells and femtocells are deployed to enhance the capacity and attain a good quality of service (QoS), by bringing the network closer to the user. Recently, it has been shown that only a small chunk of multimedia content is highly demanded by most of the users. This small chunk forms the majority of requests that come from different users at different times, which is referred to as $\textit {asynchronous content reuse}$ \cite{mono}. %To exploit this property, recently, caching has been implemented in base stations and access points to provide local content transmission services.

Caching the most popular content at network edges has been proposed to avoid serving all requests from the core network through highly congested backhaul links \cite{wr}. From the caching perspective, there are three main types of networks, namely, caching on femtocells in small cell networks, caching on remote radio heads (RRHs) in cloud radio access networks (RANs), and caching on mobile devices \cite{femtocell_mehdi, ran, D2D}. In this paper, we focus on device caching solely. The architecture of device caching exploits the large storage available in modern smartphones to cache multimedia files that might be highly demanded by users. User devices exchange multimedia content stored on their local storage with nearby devices \cite{D2D}. Since the distance between the requesting user and the caching user (a user who stores the file) will be small in most cases, device to device (D2D) communication is commonly used for content transmission \cite{D2D}.

In \cite{D2D1}, a novel architecture is proposed to improve the throughput of video transmission in cellular networks based on the caching of popular video files in base station controlled D2D communication. The analysis of this network is based on the subdivision of a macrocell into small virtual clusters, such that one D2D link can be active within each cluster. Random caching is considered where each user caches files at random and independently, according to a caching distribution.

In \cite{D2D2}, the authors studied joint optimization of cache content placement and scheduling policies to maximize the offloading gain of a cache-enabled D2D network. In \cite{D2D3}, the authors proposed an opportunistic cooperation strategy for D2D transmission by exploiting the caching capability at the users to control the interference among D2D links.

Our work is motivated by a D2D caching framework studied in \cite{D2D,D2D1,D2D2, D2D3} to propose and analyze a new D2D caching architecture with inter-cluster cooperation. We show that allowing inter-cluster D2D communication via cellular link helps to reduce the network average delay per request. To the best of our knowledge, none of the works in the literature dealt with the delay analysis of D2D caching networks with inter-cluster cooperation. The main contributions of this paper are summarized as follows:
\begin{itemize}
\item We study a D2D caching system with inter-cluster cooperation from a queueing perspective. We formulate the network average delay minimization problem in terms of cache placement and show that it is NP-hard.
\item A closed form expression of the network average delay is derived under the policy of caching popular files. Moreover, a locally optimal greedy caching algorithm is proposed within a factor of 2 of the optimum. Results show that the delay can be significantly reduced by allowing D2D caching with inter-cluster cooperation.
\end{itemize} %We show that the delay of the proposed cooperative D2D caching outperforms relevant systems without cooperation at different system parameters. Further, the effectiveness of the global greedy caching is shown compared to random caching and caching of most popular files.
%To the best of our knowledge, the idea of D2D caching with inter-cluster cooperation hasn't been studied before.

The rest of the paper is organized as follows. The system model is presented in Section II. In Section III, we formulate the problem and perform delay analysis of the system. In Section IV, two content caching schemes are studied. Finally, we discuss the simulation and analytic results  in Section V and conclude the paper in Section VI.

\section{System Model}
\label{sysmodel}
\begin{figure}%[!h]
\centering
\includegraphics[width=0.4\textwidth]{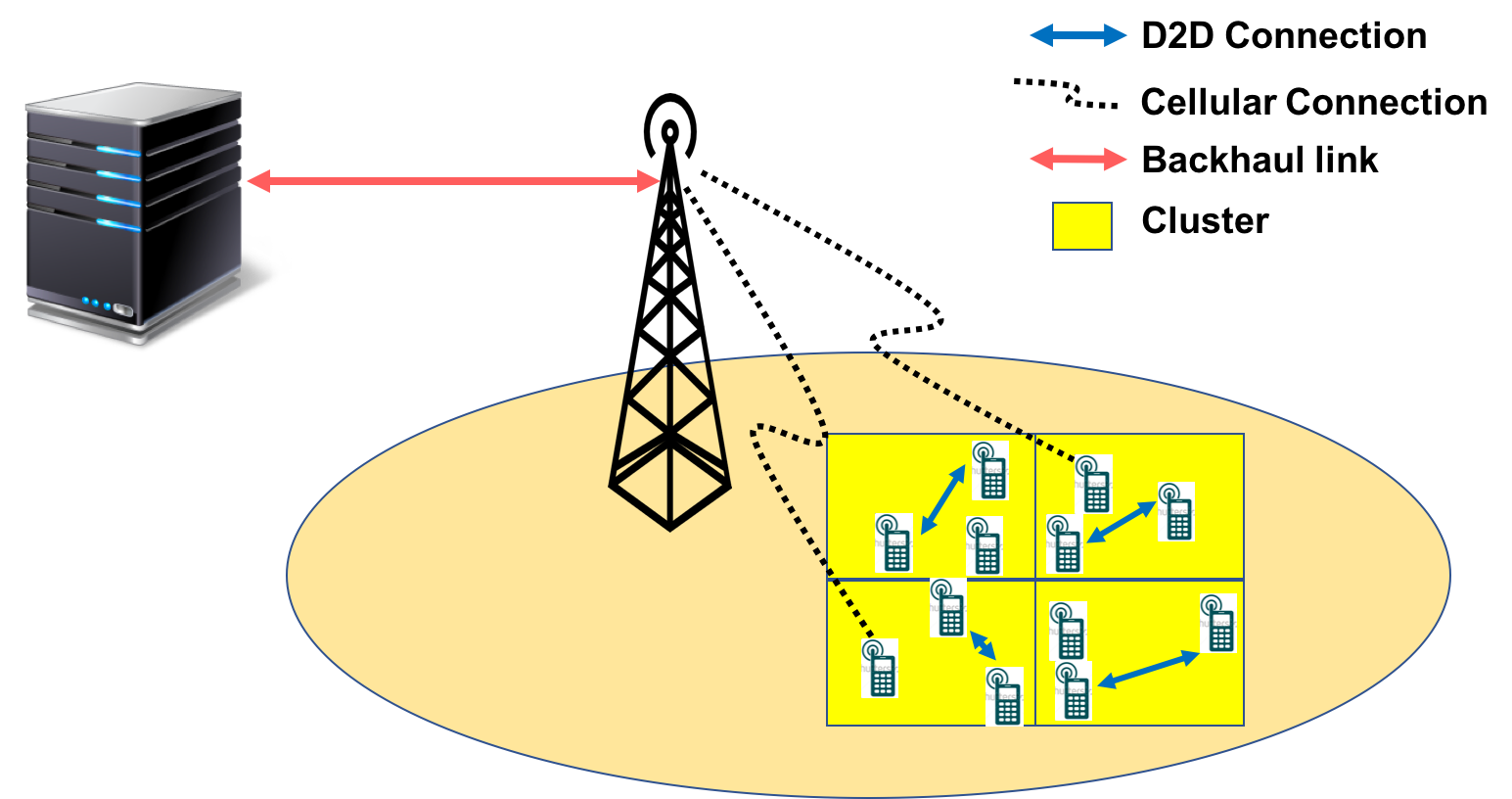}				%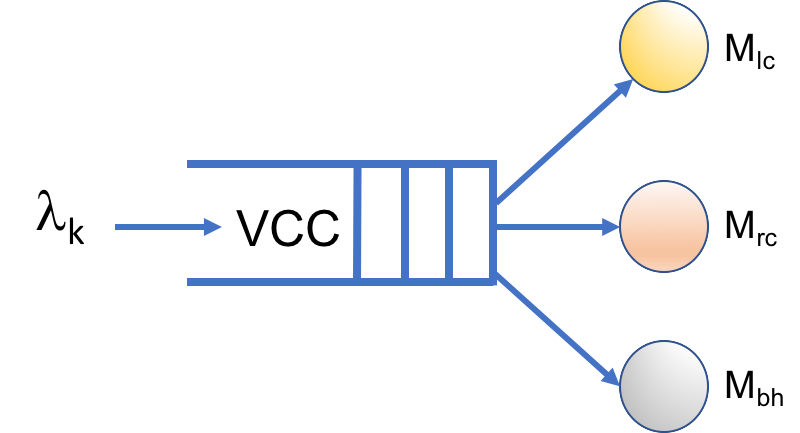
\caption {Schematic diagram of the proposed system model.}
\label{Network Model}
%\vspace{-0.4cm}
\end{figure}

\subsection{Definitions}
%We list the notations in their order of appearance in table .\ref{notation}.
In Table \ref{notation}, we summarize some of the definitions extensively used in this paper.
\begin{table}[!h]
%\vspace{-0.5cm}
\caption{List of notations.}
    \label{notation}
\begin{center}
 \begin{tabular}{ |c|l| }
  \hline
  Notation & Description \\ \hline
$\mathcal{U}$  &Set of users of size $U$\\ \hline			% = \{1, . . . , U\}
$\mathcal{F}$ &Library of files of size $F$\\ \hline			 %= \{1, . . . , F\}
 $\mathcal{K}$ &Set of clusters of size $K$\\ \hline		 %= \{1, . . . , K\}
  %$\lambda_k$ &Aggregate request arrival rate in cluster $k$ \\ \hline
   $R_{D}$     &Data rate of D2D transmission \\ \hline
    $R_{WL}$ &Data rate of inter-cluster communication using \\ & cellular network \\ \hline
     $R_{BH}$ &Data rate of backhaul transmission \\ \hline
 \end{tabular}
% \vspace{-0.4cm}
\end{center}
\end{table}	
%%%%%%%%%%%%%%%%%%%%%%%%%%%%%%%%%%%%%%%%%%%%%%%%%%%%%%%%%%%%%%%%%%%%%%%%%%%%%%%%%%%%%%%					
\subsection{Network Model}
In this subsection, we describe our proposed D2D caching network with inter-cluster cooperation. Fig.~\ref{Network Model} illustrates the system layout. A cellular network is formed by a base station (BS) and a set of users $\mathcal{U}= \{1,\dots, U\}$ placed uniformly in the cell. The cell is divided into $K$ small clusters of equal size, with $U/K$ users per cluster. Inside each cluster, users can communicate directly using high data rate D2D communication in a dedicated frequency band for D2D transmission.

Each user $u \in \mathcal{U}$ makes a request for file $ f \in \mathcal{F}$ in an independent and identical manner according to a given request probability mass function. It is assumed that each user can cache up to $M$ files, and for the caching problem to be non-trivial, it is assumed that $M \leq F$. From the cluster perspective, we assume to have a cluster's virtual cache center ($VCC$) formed by the union of devices' storage in the same cluster which caches up to $N$ files, i.e., $N = (U/K)M$.

 We define three modes of operation according to how a request for content $ f \in \mathcal{F}$ is served:
\begin{enumerate}
\item {\textbf{Local cluster mode ($M_{lc}$ mode):}} Requests are served from the local cluster.  Files are downloaded from nearby users via a one-hop D2D connection. In this mode, we neglect self-caching, i.e., when a user finds the requested file in its internal cache with zero delay.
\item \textbf{Remote cluster mode ($M_{rc}$ mode):} Requests are served from any of the remote clusters via inter-cluster cooperation. The BS fetches the requested content from a remote cluster, then delivers it to the requesting user by acting as a relay in a two-hop cellular transmission.
\item \textbf{Backhaul mode ($M_{bh}$ mode):} Requests are served directly from the backhaul. The BS obtains the requested file from the core network over the backhaul link and then transmits it to the requesting user.
\end{enumerate}

To serve a request for file $f$ in cluster $k \in \mathcal{K}$, first, the BS searches the $VCC$ of cluster $k$. If the file is cached, it will be delivered from the local $VCC$ ($M_{lc}$ mode). We assume that the BS has all information about cached content in all clusters, such that all file requests are sent to the BS, then the BS replies with the address of caching user from whom the file will be received. We use an interference avoidance scheme, where at most one transmission is allowed in each cluster on any time-frequency slot. If the file is not cached locally in cluster $k$ but cached in any of the remote clusters, it will be fetched from a randomly chosen cooperative cluster ($M_{rc}$ mode), instead of downloading it from the backhaul. Unlike multi-hop D2D cooperative caching discussed in \cite{multihop}, in our work, cooperating clusters use 2-hop high rate cellular connections to exchange cached files, such that the D2D band is dedicated only to the intra-cluster communication. Hence, all the inter-cluster communication is performed in a centralized manner through the BS. Finally, if the file has not been cached in any cluster $j \in \mathcal{K}$ in the cell, it can be downloaded from the core network through the backhaul link ($M_{bh}$ mode).

%Owing to new technologies and infrastructures of future wireless networks, it is assumed that $M_{R}$ delivers higher rate than $M_{bh}$ \cite{multi-cell}.

\subsection{Content Placement and Traffic Characteristics}
\label{sec}
%We characterize the multimedia contents from two different perspectives, namely, {popularity, size}
We use a binary matrix $\textbf{C}= [c_{k,f}]_{K\times F}$ with $c_{k,f} \in \{0, 1\}$ to denote the cache placement in all clusters, where $c_k,_f = 1$ indicates that content $f$ is cached in cluster  $k$. %We assume that the incoming content requests in all clusters follow a Poisson process. %For purpose of deriving closed-form expressions, we consider the stream of requests at each BS to be Independent Reference Model (IRM) [16].
%Moreover, we assume that the stream of requests is served sequentially based on first in first out (FIFO) criterion.
Fig. \ref{vcc}  shows the $VCC$ of a cluster $k$ modeled as a \textit{multiclass processor sharing queue} (MPSQ) with arrival rate $\lambda_k$  and three serving processors, representing the three transmission modes. According to the MPSQ definition \cite{MPSQ}, each transmission mode is represented by an M/M/1 queue with Poisson arrival rate and exponential service rate.

If a user in cluster $k$ requests a locally cached file $f$ (i.e., $c_k,_f = 1$), it will be served by the local cluster mode with an average rate $R_{D}$. However, if the requested file is cached only in any of the remote clusters (i.e., $c_k,_f = 0$  and $\sum_{j\in \mathcal{K}\setminus \{k\}}c_j,_f \geq 1$), it will be served by the remote cluster mode. We assume the uplink and the downlink transmissions have the same rate $2R_{WL}$, and the BS relays a file between cooperating clusters with an average sum rate $R_{WL}$, which is shared between clusters simultaneously served by the remote cluster mode. Finally, requests for files that are not cached in the entire cell (i.e., $\sum_{j=1}^{K}  c_k,_f = 0$) are served by the backhaul mode with an average sum rate $R_{BH}$. Similarly, $R_{BH}$ is shared between clusters simultaneously served by the backhaul mode. $R_{BH}$ accounts for the whole path rate from the core network to the user through the BS. Additionally, it is assumed that the wireless connection from the users to the BS in the backhaul mode is performed on a frequency band different from the inter-cluster communication band.
%It is assumed to have different frequency band for the backhaul mode.
%which includes the entire path  a file from the core network to the BS and then from the BS to the user. %It is assumed to have different frequencies for the remote cluster and backhaul modes.

\begin{figure}%[!h]
\begin{center}
\includegraphics[width=0.3\textwidth]{}			
\caption {Virtual cache center of cluster $k$ with request arrival rate $\lambda_k$ and three processors (modes) to serve the incoming requests. }
\label{vcc}
\end{center}
%\vspace{-0.4cm}
\end{figure}

Due to traffic congestion in the core network and the transmission delay between cooperating clusters, we assume that the aggregate transmission rates for the above three modes are ordered such that $R_D > R_{WL} >  R_{BH}$. Moreover, we assume that the content size $S_f$ is exponentially distributed with mean $\overline{S}$  bits. Hence, the corresponding request service time of the three modes also follow an exponential distribution with means $\tau_{lc} = \frac{\overline{S}}{R_{D}}$ sec/request, $\tau_{rc} = \frac{\overline{S}}{R_{WL}}$ sec/request, and $\tau_{bh} = \frac{\overline{S}}{R_{BH}}$ sec/request, respectively.
%%%%%%%%%%%%%%%%%%%%%%%%%%%%%%%%%%%%%%%%%%%%%%%%%%%%%%%%%%%%%%%%%%%%%%%%%
\section{Problem Formulation}
\label{prob}
In this section, we characterize the network average delay on a per-request basis from the global network perspective.
Specifically, we study the request arrival rate and the traffic dynamics from a queuing perspective and get a closed form expression
for the network average delay.
 \subsection{File Popularity Distribution}
We assume that the popularity distribution of files in all clusters follows Zipf's distribution with skewness order $\beta$ \cite{zipf}. However, it is assumed that the popular files are different from one cluster to another. Our assumption for the popularity distribution is extended from \cite{basic_principle}, where the authors explained that the scaling of popular files is sublinear with the number of users.\footnote
{The number of popular files increases with the number of users with a rate slower than the linear polynomial rate, e.g. the logarithmic rate.}

  To illustrate, if the first user is interested in a set of files of size $m_0$, the second user is also interested in $m_0$ files, then $m_0/2$ files are common with the first user and $m_0/2$ are new files. The third user shares $2m_0/3$ files with the first two users and has $m_0/3$ new files, etc. The union of all highly demanded files by $n$ users is $m = m_0 (1 + \frac{1}{2} + \frac{1}{3} + \dots)= m_0 \sum_{i=1}^{n}\frac{1}{i} \approx m_0$ log $n$. Hence, the library size increases sublinearly with the number of users. In this work, we assume the scaling of the library size is sublinear with the number of clusters. The cell is divided into small clusters with a small number of users per cluster, such that users in the same cluster are assumed to have the same file popularity distribution.

The probability that a file $f$ is requested in cluster $k$, with $m_0$ highly demanded files in each cluster, follows a Zipf distribution written as \cite{zipf},
\begin{equation}
P_k,_f = \frac{(  f - \frac{k-1}{k}m_0 a  +  (F - \frac{k-1}{k}m_0)b  )^{-\beta}}{\sum_{i=1}^{F}i^{-\beta}},
\label{popularity eqn}
\end{equation}
%\end{strip}
where $a = \textbf{1}(f>\frac{k-1}{k}m_0)$ and $b = \textbf{1}(f \leq\frac{k-1}{k}m_0)$, $\frac{k-1}{k}m_0$ is the order of the most popular file in the $k$-th cluster, and $\textbf{1}(.)$ is the indicator function.
As an example, for the first cluster, $P_1,_f = \frac{(  f  )^{-\beta}}{\sum_{i=1}^{F}i^{-\beta}}$, which is the Zipf's distribution with the most popular file $f = 1$. For example, if $m_0 = 60$, $P_2,_f = \frac{(  f  - 30a + (F - 30)b )^{-\beta}}{\sum_{i=1}^{F}i^{-\beta}}$ for the second cluster, which is the Zipf's distribution with the most popular file $f = \frac{m_0}{2} + 1 = 31$, correspondingly, $f = \frac{2m_0}{3} + 1 = 41$ is the most popular file in the third cluster, and so on.

%\begin{table}[!h]
%\caption{Arrival and service rates for each mode.}
%    \label{rate}
%\begin{center}
 % \begin{tabular}{ l | l }
  %  \hline
  % Arrival rate & Service rate \\ \hline
   % $\lambda_k,_1 = \lambda_k \sum_{f=1}^{F}  P_k,_f c_k,_f $ & $\mu_1 = \frac {1}{\tau_1} $  \\ \hline
   % $\lambda_k,_2 = \lambda_k \sum_{f=1}^{F}  P_k,_f (1 - c_k,_f)$min$(\sum_{m=1, m\neq k}^{K}c_m,_f,1)$ &$\mu_2 = \frac {1}{\tau_2 N_a}$\\ \hline
   % $\lambda_k,_3 = \lambda_k \sum_{f=1}^{F}  P_k,_f  \prod \limits_{k=1}^{K} (1 - c_k,_f)$ & $\mu_3 = \frac {1}{\tau_3 N_b}$ \\
    %\hline
 % \end{tabular}
%\end{center}
%\end{table}
%where $i \in \{{lc}, {rc}, {bh}\}$. For the local cluster mode, the request arrival rate is given by,
\subsection{Arrival and Service Rates}
The arrival rates for communication modes $M_{lc}$, $M_{rc}$ and $M_{bh}$ in cluster $k$ are denoted by $\lambda_{k,lc}$, $\lambda_{k,rc}$ and $\lambda_{k,bh}$, respectively while the corresponding service rates are represented by $\mu_{lc}$, $\mu_{rc}$ and $\mu_{bh}$. For the local cluster mode, we have
\begin{equation}
\lambda_k,_{lc} = \lambda_k \sum_{f=1}^{F}  P_k,_f c_k,_f,
\end{equation}
where $\sum_{f=1}^{F}  P_k,_f c_k,_f$ is the probability that the requested file is cached locally in cluster $k$. The corresponding service rate is $\mu_{lc} = \frac{1}{\tau_{lc}}$. For the remote cluster mode, the request arrival rate is defined as,
\begin{equation}
\lambda_k,_{rc} = \lambda_k \sum_{f=1}^{F}  P_k,_f (1 - c_k,_f)\mathrm{min}\Big(\sum_{m \in \mathcal{K}\setminus \{k\}}c_m,_f,1\Big), %\mathcal{S}\setminus B
\end{equation}
where min$(\sum_{m \in \mathcal{K}\setminus \{k\}}c_m,_f,1)$ equals one only if the content $f$ is cached in at least one of the remote clusters. Hence, $\sum_{f=1}^{F}  P_k,_f (1 - c_k,_f)$min$(\sum_{m \in \mathcal{K}\setminus \{k\}}c_m,_f,1)$ is the probability that the requested file $f$ is cached in any of the remote clusters given that it was not cached in the local cluster $k$. The corresponding service rate is $\mu_{rc} = \frac{1}{\tau_{rc} N_a}$. $N_a$ represents the number of cooperating clusters simultaneously served by the remote cluster mode, i.e, the number of users which share the cellular rate.  		%at different times	

Finally, for the backhaul mode, the request arrival rate is written as,
\begin{equation}
\lambda_k,_{bh} = \lambda_k \sum_{f=1}^{F}  P_k,_f  \prod \limits_{k=1}^{K} (1 - c_k,_f),
\end{equation}
where $\sum_{f=1}^{F}  P_k,_f  \prod \limits_{k=1}^{K} (1 - c_k,_f)$ is the probability that the requested file $f$ is not cached in the entire cell, hence this content could be downloaded only from the core network. The corresponding service rate is $\mu_{bh} = \frac{1}{\tau_{bh} N_b}$. Similarly, $N_b$ is defined as the number of clusters simultaneously served by the backhaul mode.

The traffic intensity of a queue is defined as the ratio of mean service time to mean inter-arrival time. We introduce $\rho_k$ as a metric of the traffic intensity at cluster $k$,			%which is a measure of the average occupancy of a server

\begin{equation}
\rho_k = \frac{\lambda_k,_{lc}}{\mu_{lc}} +  \frac{\lambda_k,_{rc}}{\mu_{rc}}  +  \frac{\lambda_k,_{bh}}{\mu_{bh}}	
\end{equation}
Similar to \cite{delayequation}, we consider $\rho_k < 1 $ as the stability condition; otherwise, the overall delay will be infinite. The traffic intensity at any cluster is simultaneously related to the request arrival rate and the transmission rates of the three modes.

\subsection{Network Average Delay}	
The average delay per request for cluster $k$ can be defined as \cite{delayequation},
\begin{eqnarray}
D_k &= \frac{\rho_k}{\lambda_k} + \frac{\frac{\lambda_k,_{lc}}{\mu_{lc}^2} +  \frac{\lambda_k,_{rc}}{\mu_{rc}^2}  +  \frac{\lambda_k,_{bh}}{\mu_{bh}^2}}{1 - \rho_k}
         \label{T eqn}
 \end{eqnarray}
Based on the analysis of the delay in a single cluster, we can derive the network weighted average delay per request as,
\begin{eqnarray}
D =  \frac{1}{\lambda} \sum_{k=1}^{K} \lambda_k D_k,
\label{delay equation}
 \end{eqnarray}
where $\lambda =  \sum_{i=1}^{K}\lambda_i$ denotes the overall user request arrival rate in the cell. We can observe that the network average delay depends on the arrival rates of the three transmission modes, which are, in turn, functions of the content caching scheme.
Because of the limited caching capacity on mobile devices, we would like to optimize the cache placement in each cluster to minimize the network weighted average delay per request. The delay optimization problem is formulated as,

%.  \overline{N_b} &= K \frac{\lambda_k,_{rc}}{ \lambda_k} = K(1- ( \nonumber \\ % no number is shown
  %. &\sum_{f=\frac{k-1}{k}m_0 + 1}^{\frac{k-1}{k}m_0 + N}  P_k,_f  +  \sum_{j=1, j\ne k}^{K}\sum_{f=c}^{\frac{j-1}{j}m_0 + N}  P_k,_f ))  % there is a number
%\vspace{-0.8cm}
%\mathbb{C}
\begin{align}
\label{optimize eqn}
%&\mathrm{minimize} \quad D \\ 			
& \underset{c_k,_f}{\text{minimize}} \quad D \\
&\textrm{subject to}\quad  \sum_{f=1}^{F} c_k,_f \leq N,\quad c_k,_f \in \{ 0, 1\},
\label{optimize eqn1}
\end{align}		
where (\ref{optimize eqn1}) is the constraint that the maximum cache size is $N$ files per cluster, and the file is either cached entirely or is not cached, i.e., no partial caching is allowed. The objective function in (\ref{optimize eqn}) is not a convex function of the cache placement elements $c_k,_f \in \{ 0, 1\}$. Moreover, this equation can be reduced to a well- known $0-1$ knapsack problem which is already proven to be NP-hard in \cite{NP-hard}.

In the next section, we analyze the network average delay under several caching policies. Moreover, we reformulate the optimization problem in (\ref{optimize eqn}) as a well-known structure that that has a locally optimal solution within a factor of 2 of the optimum.

\section{Proposed Caching Schemes} 	%Proposed
\label{Caching Schemes}
 \subsection{Caching Popular Files (CPF)}
The most popular files in each cluster are independently stored in their local cluster caches \cite{multi-cell}. Here, we assume that the request arrival rate $\lambda_k$ is equal for all clusters.

\subsubsection{Arrival Rate for D2D Communication}
The arrival rate of the D2D communication mode is written as,
 \begin{equation}
\lambda_k,_{lc} = \lambda_k \sum_{f=\frac{k-1}{k}m_0 + 1}^{\frac{k-1}{k}m_0 + N}  P_k,_f, 	
\label{lamda1}
\end{equation}
where $\sum_{f=\frac{k-1}{k}m_0 + 1}^{\frac{k-1}{k}m_0 + N}  P_k,_f$ is the probability that the requested file is cached in the local cluster $k$, and $ f=\frac{k-1}{k}m_0 + 1$ is the most popular file index for cluster $k$. As an example, for the first cluster, $\lambda_1,_{lc} = \lambda_1 \sum_{f=1}^{N}  P_1,_f $. %The corresponding service rate for this mode is $\mu_{lc} = \frac {1}{\tau_1}$.
\subsubsection{Arrival Rate for Inter-cluster Communication}
\label{NaNb}	
The arrival rate of the inter-cluster communication mode is given by,
%$\mu_2 = \frac {1}{\tau_2 N_a}$,
  \begin{equation}
  \lambda_k,_{rc} = \lambda_k \sum_{j \in \mathcal{K}\setminus \{k\}}\sum_{f=c}^{\frac{j-1}{j}m_0 + N}  P_k,_f,    %{j=1, j\ne k}^{K}
  \label{lamda2}
\end{equation}
where $c $ is defined as max$\big(\frac{j-2}{j-1}m_0 + N + 1, \frac{j-1}{j}m_0 + 1\big)$, such that the cached content in remote clusters is counted only once when calculating $\lambda_k,_{rc}$ (even if the content is cached in many clusters). The probability that the requested content is cached in any of the remote clusters, while non-existing in the local cluster, can be written as $\sum_{j \in \mathcal{K}\setminus \{k\}}E(c_{k,j,f})$, where
\begin{equation}
E(c_{k,j,f}) = \sum_{f=c}^{\frac{j-1}{j}m_0 + N}  P_k,_f
\end{equation}
$E(c_{k,j,f})$ is the probability that a file $f$ requested by a user in cluster $k$ is cached in a remote cluster $j\neq k$, and is not replicated in other clusters.														%replicated in cluster j?1

To compute the service rate $\mu_{rc}$, first we need to calculate the number of cooperating clusters $N_a$, since they share the cellular rate. As introduced in Section \ref{prob}, $N_a$ is a random variable representing the number of clusters served by the cellular links whose mean is given by,	
										%= \sum_{k=1}^{K} P_k,_f\textbf{1}(1 - c_k,_f)$min$(\sum_{m=1, m\neq k}^{K}c_m,_f,1)$ and
\begin{equation}
 \overline{N_a} =K \frac{\lambda_k,_{rc}}{ \lambda_k} =  K\sum_{j \in \mathcal{K}\setminus \{k\}}\sum_{f=c}^{\frac{j-1}{j}m_0 + N}  P_k,_f
 \end{equation}

 \subsubsection{Arrival Rate for Backhaul Communication}
The arrival rate of the backhaul communication mode is defined as,
  \begin{align}
  \label{lamda3}
  \lambda_k,_{bh}& = \lambda_k\big(1- (\lambda_k,_{lc} + \lambda_k,_{rc})\big)\nonumber \\
  &= \lambda_k\Big(1- \big( \sum_{f=\frac{k-1}{k}m_0 + 1}^{\frac{k-1}{k}m_0 + N}  P_k,_f  +  \sum_{j \in \mathcal{K}\setminus \{k\}}\sum_{f=c}^{\frac{j-1}{j}m_0 + N}  P_k,_f \big)\Big)
\end{align}
Similar to $N_a$, $N_b$ is a random variable representing the number of clusters served by the backhaul link whose mean is given by,
\begin{align}
   \overline{N_b} &= K \frac{\lambda_k,_{bh}}{ \lambda_k} = K\Big(1- \big( \nonumber \\
  &\sum_{f=\frac{k-1}{k}m_0 + 1}^{\frac{k-1}{k}m_0 + N}  P_k,_f  +  \sum_{j \in \mathcal{K}\setminus \{k\}}\sum_{f=c}^{\frac{j-1}{j}m_0 + N}  P_k,_f \big)\Big)  % there is a number
\end{align}
Obviously, we have $ \lambda_k = \lambda_k,_{lc}+\lambda_k,_{rc}+\lambda_k,_{bh}$. From (\ref{lamda1}), (\ref{lamda2}), and (\ref{lamda3}), the network average delay can be calculated directly from (\ref{delay equation}). CPF in each cluster is computationally straightforward if the highly demanded content is known. Additionally, it is easy to implement in an independent and distributed manner. However, CPF achieves high performance only if the popularity exponent $\beta$ is large enough, since a small chunk of content is highly demanded, which can be cached entirely in each cluster.

%such that the cluster cache size is sufficient to store this chunk.   and
%%%%%%%%%%%%%%%%%%%%%%%%%%%%%%%%%%%%%%%%%%%%%%%%%%%%%%%%%%%%%%%%%%%%%%%%%%%%
 \subsection{Greedy Caching Algorithm (GCA)}	
 \label{mat}
In this subsection, we introduce a computationally efficient caching algorithm. We prove that the minimization problem in (\ref{optimize eqn}) can be written as a \textit{minimization of a supermodular function} subject to \textit{partition matroid constraints}. This structure has a greedy solution which has been proven to be local optimal within a factor 2 of the optimum \cite{multi-cell}, \cite{solnmono2}.

We start with the definition of supermodular and matroid functions, then we introduce and prove some relevant lemmas.
\subsubsection{Supermodular Functions}
Let $\mathcal{S}$ be a finite ground set. The power set of the set $\mathcal{S}$ is the set of all subsets of $\mathcal{S}$, including the empty set and $\mathcal{S}$ itself. A set function $f$, defined on the powerset of $\mathcal{S}$ as $f$: $2^\mathcal{S}$$\to R$, is supermodular if for any $A \subseteq B \subseteq \mathcal{S}$ and $x \in \mathcal{S}\setminus B$ we have \cite{solnmono2}
\begin{equation}
f(A\cup \{x\}) - f(A) \leq f(B\cup \{x\}) - f(B)
\end{equation}
To explain, let $f_A(x) = f(A\cup x) - f(A)$ denote the marginal value of an element $x \in \mathcal{S}$ with respect to a subset $A \subseteq \mathcal{S}$. Then, $\mathcal{S}$ is supermodular if for all $A \subseteq B \subseteq \mathcal{S}$ and for all $x \in \mathcal{S}\setminus B$ we have $f_A(x) \leq f_B(x)$, i.e., the marginal value of the included set is lower than the marginal value of the including set \cite{ solnmono2}.

%Equivalently, the definition of supermodularity can be formulated as,
 %\begin{equation}
%f(A) + f(B) \leq f(A\cup B) +  f(A\cap B)
%\end{equation}
 \subsubsection {Matroid Functions}
  %\item \textit{Matroid Functions}\\
  Matroids are combinatorial structures that generalize the concept of linear independence in matrices \cite{solnmono2}. A matroid $\mathcal{M}$ is defined on a finite ground set $\mathcal{S}$ and a collection of subsets of $\mathcal{S}$ said to be independent. The family of these independent sets is denoted by $\mathcal{I}$ or $\mathcal{I}(\mathcal{M})$. It is common referring to a matroid $\mathcal{M}$ by listing its ground set and its family of independent sets, i.e., $\mathcal{M} = (\mathcal{S}, \mathcal{I})$. For $\mathcal{M}$ to be a matroid, $\mathcal{I}$ must satisfy these three conditions:
\begin {itemize}
\item $\mathcal{I}$ is nonempty set.%, in particular, $\phi \in \mathcal{I}$.
\item $\mathcal{I}$ is downward closed; i.e., if $B \in \mathcal{I}$ and $A \subseteq B$, then $A \in \mathcal{I}$.
\item If $A$ and  $B$ are two independent sets of  $\mathcal{I}$ and $B$ has more elements than $A$, then $\exists{e} \in B\setminus A$ such that $A \cup\{e\}\in \mathcal{I}$.

\end{itemize}
One special case is a partition matroid in which the ground set $\mathcal{S}$ is partitioned into disjoint sets $\{S_1, S_2,\dots , S_l\}$, where
 \begin{equation}
 \label{mat defn eqn}
\mathcal{I} = \{A\subseteq \mathcal{S}: |A\cap S_i|\leq k_i  \textrm{ for  all}\ i = 1, 2,\dots, l\},
\end{equation}
for some given integers $k_1, k_2, \dots, k_l$.

\begin{lemma}
The constraints in equation (\ref{optimize eqn1}) can be written as a partition matroid on a ground set that characterizes the caching elements on all clusters.
\end{lemma}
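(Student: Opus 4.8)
The plan is to exhibit an explicit ground set and partition, and then verify the three matroid axioms directly. First I would define the ground set $\mathcal{S}$ to be the set of all candidate caching decisions, one element per (cluster, file) pair: $\mathcal{S} = \{(k,f) : k \in \mathcal{K}, f \in \mathcal{F}\}$, so $|\mathcal{S}| = KF$. A cache placement matrix $\mathbf{C} = [c_{k,f}]$ then corresponds bijectively to the subset $A = \{(k,f) : c_{k,f} = 1\} \subseteq \mathcal{S}$, and the per-cluster capacity constraint $\sum_{f=1}^{F} c_{k,f} \leq N$ becomes $|A \cap S_k| \leq N$, where $S_k = \{(k,f) : f \in \mathcal{F}\}$ is the block of all file-decisions belonging to cluster $k$.

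Next I would observe that $\{S_1, \dots, S_K\}$ is a partition of $\mathcal{S}$: the blocks are disjoint because each element $(k,f)$ has a unique first coordinate, and their union is all of $\mathcal{S}$. Hence, taking $l = K$ and $k_i = N$ for every $i$ in the definition~(\ref{mat defn eqn}), the family
\begin{equation}
\mathcal{I} = \{A \subseteq \mathcal{S} : |A \cap S_k| \leq N \ \text{for all}\ k = 1, \dots, K\}
\end{equation}
is exactly the collection of feasible cache placements under~(\ref{optimize eqn1}). It then remains to confirm that $(\mathcal{S}, \mathcal{I})$ satisfies the three matroid axioms. Non-emptiness is immediate since $\emptyset \in \mathcal{I}$. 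Downward closure follows because $A' \subseteq A$ implies $|A' \cap S_k| \leq |A \cap S_k| \leq N$ for each $k$. For the exchange (augmentation) property, given $A, B \in \mathcal{I}$ with $|B| > |A|$, a counting argument shows there must be some block $S_k$ with $|B \cap S_k| > |A \cap S_k|$; picking any $e \in (B \setminus A) \cap S_k$ gives $|(A \cup \{e\}) \cap S_k| = |A \cap S_k| + 1 \leq |B \cap S_k| \leq N$, while counts in all other blocks are unchanged, so $A \cup \{e\} \in \mathcal{I}$.

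I expect the only subtle point to be the exchange axiom, and specifically the pigeonhole step: one must argue that if $\sum_k |B \cap S_k| = |B| > |A| = \sum_k |A \cap S_k|$ then at least one summand on the left strictly exceeds the corresponding one on the right — this is routine but is the single place where the argument is not a one-line verification. Everything else (the bijection between matrices and subsets, disjointness of the blocks, translation of the cardinality constraint) is bookkeeping. I would close by noting that this is the canonical partition-matroid construction, so the same reasoning that makes $\mathcal{I}$ a matroid is what later licenses applying the greedy $2$-approximation guarantee cited from \cite{multi-cell, solnmono2}.
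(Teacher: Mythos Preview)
Your proposal is correct and follows essentially the same route as the paper: both define the ground set of (cluster, file) caching elements, partition it by cluster into blocks $S_k$, and identify the capacity constraints $\sum_f c_{k,f}\le N$ with the family $\{A\subseteq\mathcal{S}:|A\cap S_k|\le N\}$, matching the partition-matroid template~(\ref{mat defn eqn}) with $l=K$ and $k_i=N$. The only difference is that the paper stops at recognizing the template, whereas you additionally spell out the three matroid axioms; this extra verification is standard and correct, so there is no substantive divergence.
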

\begin{proof}
See Appendix A.
\end{proof}

\begin{lemma}
The objective function in equation (\ref{optimize eqn}) is a monotone non-increasing supermodular function.
\end{lemma}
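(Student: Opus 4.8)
The plan is to regard $D$ as a set function on the finite ground set $\mathcal{S}$ of cluster--file pairs underlying the partition matroid of Lemma~1, a subset $\mathcal{A}\subseteq\mathcal{S}$ encoding the placement through $c_{k,f}=1$ precisely when $(k,f)\in\mathcal{A}$, and then to verify both properties termwise. Since $D=\frac1\lambda\sum_k\lambda_kD_k$, and since the monotone non-increasing set functions and the supermodular set functions are each closed under addition and under multiplication by positive scalars, it suffices to prove that every $\lambda_kD_k(\cdot)$ is monotone non-increasing and supermodular. So the whole argument reduces to a single generic cluster $k$.

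First I would put $\lambda_kD_k$ into a form in which its dependence on $\mathcal{A}$ is transparent. Writing $p_k(\mathcal{A})=\sum_fP_{k,f}c_{k,f}$ for the local-hit probability, $r_k(\mathcal{A})=\sum_fP_{k,f}\prod_j(1-c_{j,f})$ for the cell-wide-miss probability, and $q_k=1-p_k-r_k$, the arrival rates are $\lambda_{k,lc}=\lambda_kp_k$, $\lambda_{k,rc}=\lambda_kq_k$, $\lambda_{k,bh}=\lambda_kr_k$, and (\ref{T eqn}) becomes $\lambda_kD_k=\rho_k+\lambda_kB_k/(1-\rho_k)$ with $\rho_k$ and $B_k:=\sum_i\lambda_{k,i}/\mu_i^2$ both affine in $(p_k,r_k)$. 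I would then record the structural facts: $p_k$ is modular and non-decreasing (it involves only cluster-$k$ elements, each with non-negative weight); each term $\prod_j(1-c_{j,f})$ is monotone non-increasing and supermodular --- a one-line check, its marginal value being $-1$ exactly when file $f$ passes from uncached-everywhere to cached somewhere and $0$ otherwise --- so $r_k=\sum_fP_{k,f}\prod_j(1-c_{j,f})$ is monotone non-increasing and supermodular; and, because $R_D>R_{WL}>R_{BH}$ forces $\mu_{lc}>\mu_{rc}>\mu_{bh}$, both $\rho_k$ and $B_k$ are non-increasing in $p_k$ and non-decreasing in $r_k$, hence $\rho_k(\mathcal{A})$ and $B_k(\mathcal{A})$ are monotone non-increasing (caching can only reroute a request to a faster mode). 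Since $\lambda_kD_k$ is strictly increasing in each of $\rho_k$ and $B_k$ on the stable region $\rho_k<1$, monotone non-increasingness of $\lambda_kD_k$ follows immediately.

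For supermodularity I fix $\mathcal{A}\subseteq\mathcal{B}$ and $x=(k_0,f_0)\notin\mathcal{B}$ and set $\mathrm{gain}(x\mid\mathcal{A}):=\lambda_kD_k(\mathcal{A})-\lambda_kD_k(\mathcal{A}\cup\{x\})\ge0$; the goal is $\mathrm{gain}(x\mid\mathcal{A})\ge\mathrm{gain}(x\mid\mathcal{B})$. Adding $x$ shifts a request of probability mass $\delta_0=P_{k,f_0}$ in one cluster from the backhaul mode, or from the remote mode, to a faster mode; I would enumerate the cases ($k_0=k$ or $k_0\ne k$; and whether the affected request is backhaul- or remote-served in $\mathcal{A}$ and in $\mathcal{B}$) and note that, because $\mathcal{A}\subseteq\mathcal{B}$, the source mode at $\mathcal{A}$ is never faster than at $\mathcal{B}$. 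Parametrising the shift by $t\in[0,\delta_0]$, a short computation gives $\mathrm{gain}(x\mid\mathcal{A})=\int_0^{\delta_0}\gamma_m\big(\rho_k^{\mathcal{A}}(t),B_k^{\mathcal{A}}(t)\big)\,dt$ with $\gamma_m(\rho,B)=c_m+\lambda_k\big[d_m/(1-\rho)+c_mB/(1-\rho)^2\big]$, where $c_m,d_m>0$ are the per-$t$ decrements of $\rho_k,B_k$ along the shift, larger for a slower source, and where along the shift $\rho_k$ and $B_k$ fall linearly (equivalently $B_k/(1-\rho_k)$ is convex in $t$). Since $\gamma_m$ is increasing in $\rho$, in $B$, and in $(c_m,d_m)$, the claim follows once I show $\rho_k^{\mathcal{A}}(t)\ge\rho_k^{\mathcal{B}}(t)$ and $B_k^{\mathcal{A}}(t)\ge B_k^{\mathcal{B}}(t)$ for all $t\in[0,\delta_0]$, for then the integrand at $\mathcal{A}$ dominates the one at $\mathcal{B}$ pointwise (the case where $\mathcal{B}$ has no move at all being trivial by non-increasingness).

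The hard part will be that last pair of pointwise inequalities in the one case where the configurations disagree on the source mode --- backhaul at $\mathcal{A}$, remote at $\mathcal{B}$ --- because there the $\mathcal{A}$-shift starts from heavier load ($\rho_k(\mathcal{A})\ge\rho_k(\mathcal{B})$, $B_k(\mathcal{A})\ge B_k(\mathcal{B})$) yet also descends faster ($c_m$ larger) over the whole interval, so the two effects seem to compete. The resolution is the load-gap inequality $\rho_k(\mathcal{A})-\rho_k(\mathcal{B})\ge\delta_0\lambda_k(\mu_{bh}^{-1}-\mu_{rc}^{-1})$ together with its $B_k$ analogue, valid because whatever elements of $\mathcal{B}\setminus\mathcal{A}$ rendered $f_0$ remote-available already lowered $\rho_k$ (and $B_k$) by at least the extra amount a backhaul-to-local shift must traverse; substituting these bounds yields $\rho_k^{\mathcal{A}}(t)-\rho_k^{\mathcal{B}}(t)\ge(\delta_0-t)\lambda_k(\mu_{bh}^{-1}-\mu_{rc}^{-1})\ge0$ on $[0,\delta_0]$, and likewise for $B_k$, closing the argument. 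One subsidiary point to dispatch along the way: $\mu_{rc}$ and $\mu_{bh}$ nominally depend on the random counts $N_a,N_b$; consistently with the rest of the paper I would freeze them at $\overline{N_a},\overline{N_b}$ so that the three service rates are constants ordered $\mu_{lc}>\mu_{rc}>\mu_{bh}$ throughout the analysis.
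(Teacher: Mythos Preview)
Your argument is correct and in fact more complete than the paper's own proof. Both you and the paper freeze $N_a,N_b$ at their means so that the three service rates become fixed constants ordered $\mu_{lc}>\mu_{rc}>\mu_{bh}$, and both proceed by a case analysis on how the source mode of the affected file differs between the smaller and the larger placement set. The essential difference is in \emph{which} quantity is shown to be supermodular. The paper's Appendix~B works with the expected download time $G(\mathcal{A})=\sum_f P_{k,f}\,\tau_f(\mathcal{A})$, where $\tau_f\in\{\overline S/R_D,\;\overline S/\overline{R_{WL}},\;\overline S/\overline{R_{BH}}\}$; this is exactly $\rho_k/\lambda_k$, the first (linear) summand of $D_k$ in~(\ref{T eqn}). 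The case-by-case comparison of $G(A\cup\{s_i^f\})-G(A)$ with $G(A'\cup\{s_i^f\})-G(A')$ is then immediate from the rate ordering, and that is all the paper verifies: the nonlinear queueing term $\lambda_kB_k/(1-\rho_k)$ never enters its computation.

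Your integral representation $\mathrm{gain}(x\mid\mathcal{A})=\int_0^{\delta_0}\gamma_m\bigl(\rho_k^{\mathcal{A}}(t),B_k^{\mathcal{A}}(t)\bigr)\,dt$, with $\gamma_m$ increasing in all four of its arguments, together with the load-gap inequality $\rho_k(\mathcal{A})-\rho_k(\mathcal{B})\ge\delta_0\lambda_k(\mu_{bh}^{-1}-\mu_{rc}^{-1})$ (and its $B_k$ analogue) in the one genuinely nontrivial case $k_0=k$ with backhaul source at $\mathcal{A}$ and remote source at $\mathcal{B}$, is precisely what is needed to push the supermodularity through the convex map $(\rho,B)\mapsto\rho+\lambda_kB/(1-\rho)$. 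So your route yields the lemma for the full $D_k$ as stated, whereas the paper's shorter calculation establishes supermodularity only for the linear surrogate $\rho_k$ and tacitly treats that as sufficient.
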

\begin{proof}
See Appendix B.
\end{proof}
The greedy solution for this problem structure has been proven to be locally optimal within a factor 2 of the optimum \cite{multi-cell}, \cite{solnmono2}.
The greedy caching algorithm for the proposed D2D caching system with inter-cluster cooperation is  illustrated in Algorithm 1.

\begin{algorithm}
    \SetKwInOut{Input}{Input}
    \SetKwInOut{Output}{Output}

    %\underline{function Euclid} $(a,b)$\;
    \Input{$K$, $F$, $N$, $\beta$, $\overline{S}$, $R_D$, $\overline{R_{WL}}$, $\overline{R_{BH}}$;}
    \textbf{Initialization}: {$C \gets (0)_{K\times F}$}\;
    \tcc{Check if all clusters are fully cached.}
       \While{$\sum_{k=1}^{K} \sum_{f=1}^{F}c_{k,f} < NK$}		
      {
        ($k^*$, $f^*$) $\gets \mathrm{argmax}_{(k, f)} D(C) - D(C\cup S_k^f)$\;
        \tcc{File achieving highest marginal value is cached.}
        $c_{k^*,f^*}=1$ \;
        %\Comment{...}
         %\Statex \Comment{}

      }
      \Output{Cache placement $C$;}
 \caption{Greedy Caching Algorithm}
\end{algorithm}

 %\KwData{$K$, $F$, $N$, $\overline{S}$ }	%\{$C_k$, $\forall k$\}
% \KwResult{ Cache placement $C$ }
% \textbf{Initialization}: {$C \gets (0)_{K\times F}$}\;

%	\While{$\sum_{k=1}^{K} \sum_{f=1}^{F}c_{k,f} < NK$}{
 %	($k^*$, $f^*$) $\gets \mathrm{argmax}_{(k, f)} D(C) - D(C\cup S_k^f)$\;
 %	$c_{k^*,f^*}=1$ \;

\section{Numerical Results}
\label{sim}
In this section, we evaluate the performance of our proposed inter-cluster cooperative architecture using simulation and analytic results. Results are obtained with the following parameters: $\lambda_k = 0.5$ requests/sec, $m_0=60$ files, $F = 108$ files, $\overline{S} = 4$ Mbits, $K = 5$ clusters, $U = 25$ users, $M=4$ files, and $N= 20$ files. $R_{WL} =50$ Mbps and $R_{BH} = 5$ Mbps  as in \cite{multi-cell}. For a typical D2D communication system with transmission power of 20 dBm, transmission range of 10 m, and free space path loss as in \cite{basic_principle}, we have $R_{D} = 120$ Mbps.

\begin{figure}%[!h]
\begin{center}
\includegraphics[width=3.2 in]{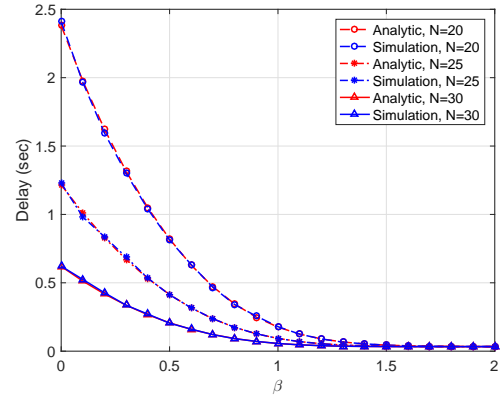}  %{dd.eps}
\caption {Network average delay versus popularity exponent under CPF scheme. }
\label{delay}
\vspace{-0.4cm}
\end{center}
\end{figure}

% Cluster cache size under CPF scheme ($\beta = 1$).
%Request arrival rate $\lambda_k$ ($\beta = 0.5$).
%Network average delay versus different system parameters ($R_{D}= 50$ Mbps, $\overline{R_{WL}}= 15$ Mbps, $\overline{R_{BH}}= 10$Mbps for (b) and (c)).

\begin{figure*}
%\vspace{-0.4cm}
\centering
%, with and without inter-cluster cooperation ($\beta =1$)
  \subfigure[Average delay (plotted against the left hand side y-axis) and gain (plotted against the right hand side y-axis) vs cluster cache size under CPF scheme.]{\includegraphics[width=2.3in]{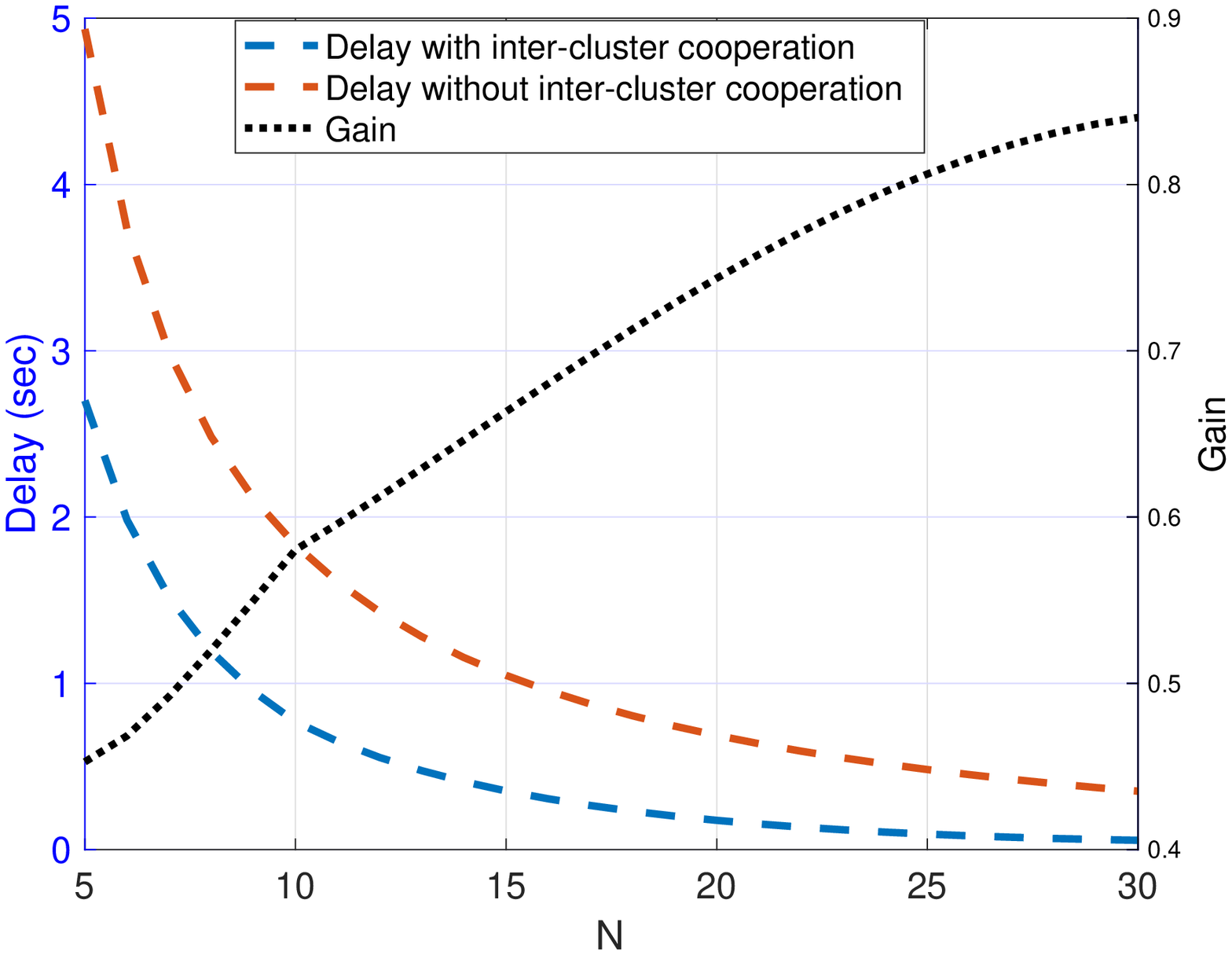}
\label{cache_size}}
\subfigure[Average delay vs request arrival rate under three caching schemes, CPF, GCA, and RC ($R_{D}= 50$ Mbps, $\overline{R_{WL}}= 15$ Mbps, $\overline{R_{BH}}= 10$Mbps, $\beta=0.5$).]{\includegraphics[width=2.3in]{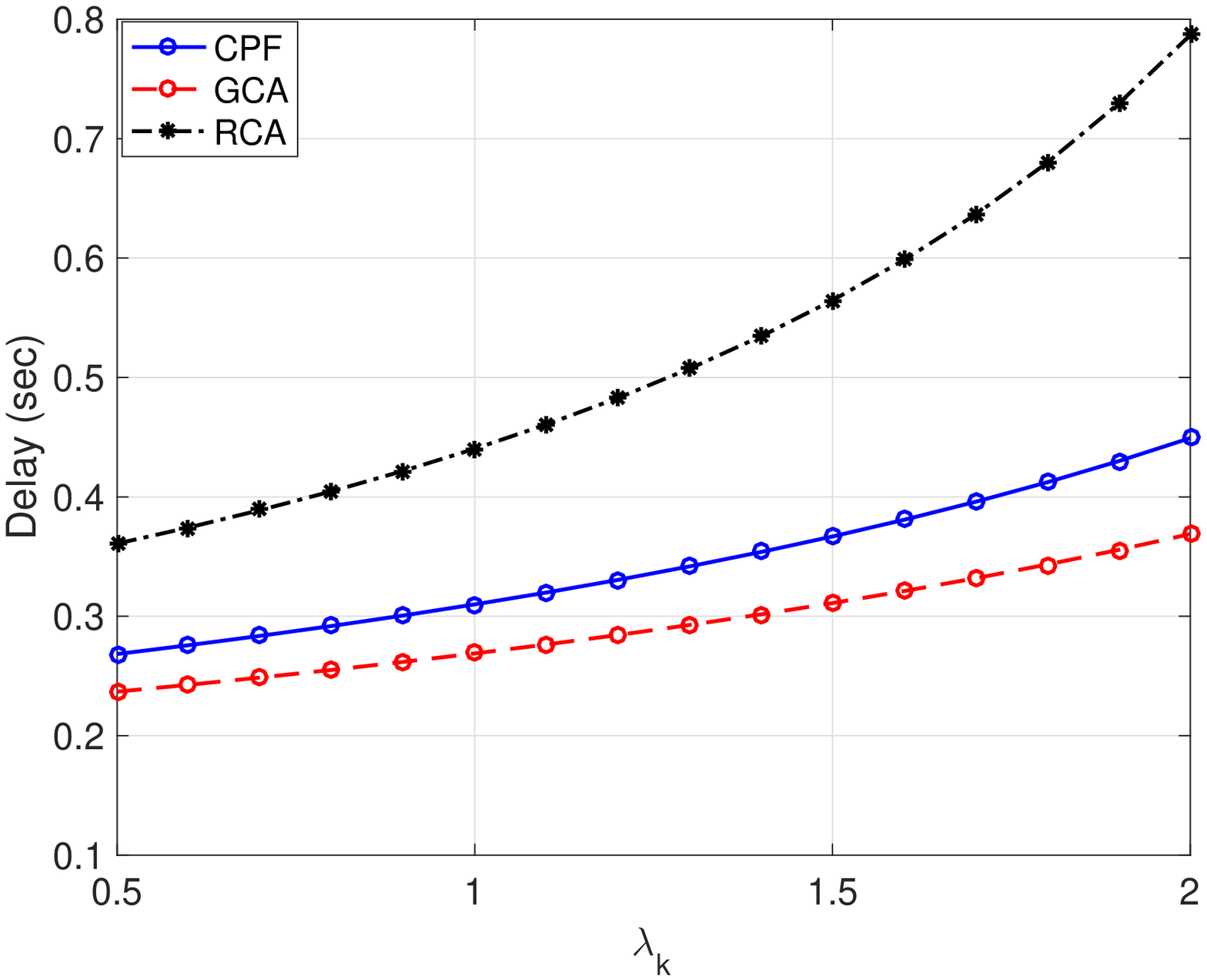}		%, $\beta=0.5$
  \label{locally optimal delay vs lambda}}
\subfigure[Average delay vs popularity exponent for three caching schemes, CPF, GCA, and RC ($R_{D}= 50$ Mbps, $\overline{R_{WL}}= 15$ Mbps, $\overline{R_{BH}}= 10$Mbps, $\lambda_k = 0.5$ requests/sec).]{\includegraphics[width=2.3in]{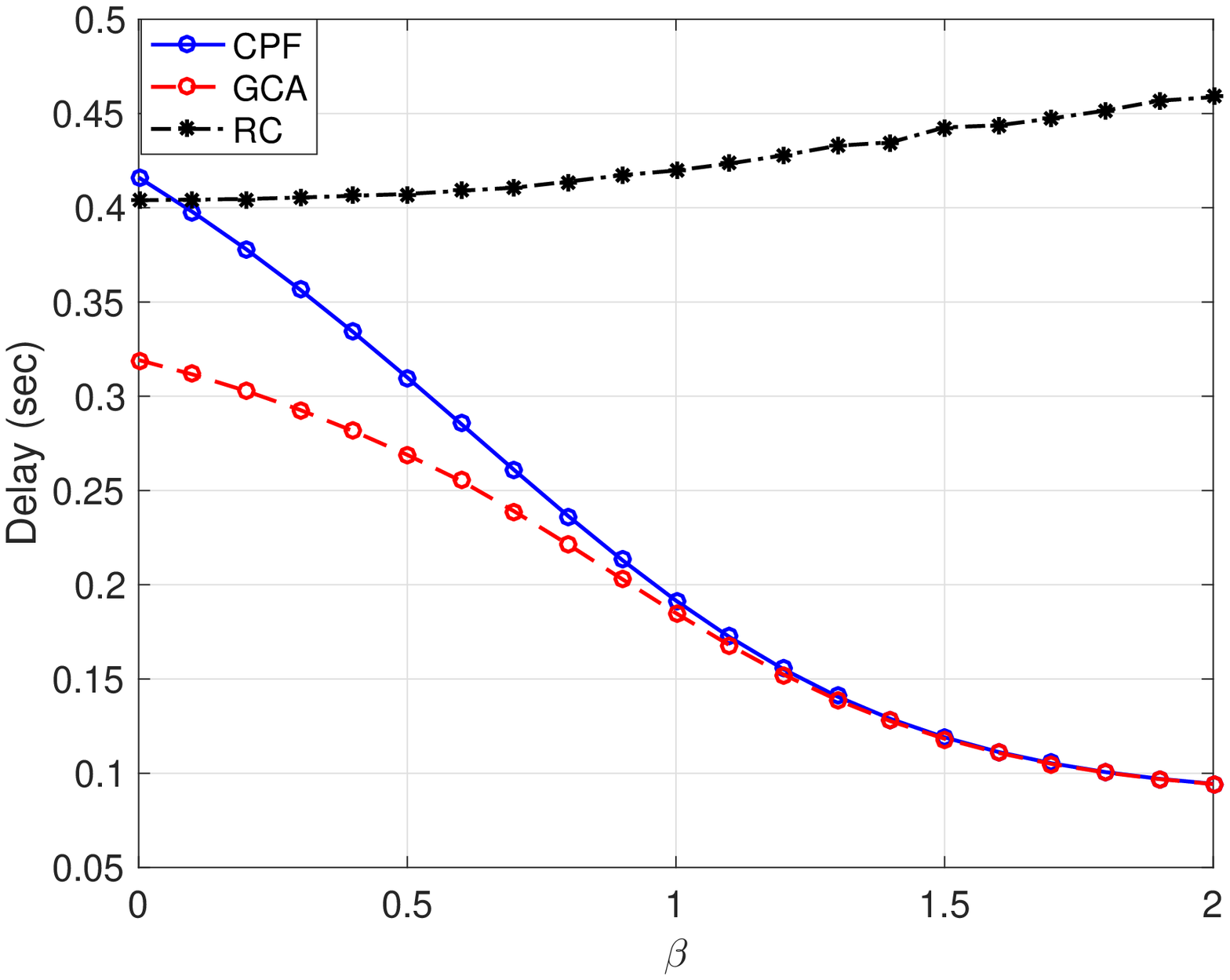}		%$\overline{R_{BH}}= 10$Mbps
  \label{CPF and GCA}}
\caption{Evaluation and comparison of average delay for the proposed schemes and random caching for various system parameters.}	
%The link rates are $R_{D}= 50$ Mbps, $\overline{R_{WL}}= 15$ Mbps, and $\overline{R_{BH}}= 10$Mbps for (b) and (c).
\label{all}
\end{figure*}

%Furthermore, simulation results are performed to validate the analytical model. (\ref{delay equation}). In Fig.~\ref{delay}, we verify the accuracy of the analytic results of the network average delay under CPF derived in (\ref{delay equation})
In Fig.~\ref{delay}, we verify the accuracy of the analytic results of the network average delay under CPF with inter-cluster cooperation. The theoretical and simulated results for the network average delay under CPF scheme are plotted together, and they are consistent. We can see that the network average delay is significantly improved by increasing the cluster cache size $N$. Moreover, as $\beta$ increases, the average delay decreases. This is attributed to the fact that a small chunk of content forms most requests, which can be cached locally in each cluster and delivered via high data rate D2D communication.

In Fig.~\ref{all}, we evaluate and compare the performance of various caching schemes.  In Fig.~\ref{cache_size}, our proposed inter-cluster cooperative caching system is compared with a D2D caching system without cooperation under CPF scheme. For a D2D caching system without cooperation, requests for files that are not cached in the local cluster are downloaded directly from the core network. Fig.~\ref{cache_size} shows that, for a small cluster cache size, the delay reduction (gain) of our proposed inter-cluster cooperative caching is higher than 45\% with respect to a D2D caching system without inter-cluster cooperation and greater than 80\% if the cluster cache size is large. We define the delay reduction gain as
 \begin{equation}
\textrm{Gain} = 1 - \frac{\textrm{Delay with inter-cluster cooperation}}{\textrm{Delay without inter-cluster cooperation}}	
\end{equation}
%This is attributed to the fact that the inter-cluster cooperative caching provides an additional route to serve requests via the cellular transmission with a data rate higher than the backhaul data rate. 		

Fig.~\ref{locally optimal delay vs lambda} shows the network average delay plotted against request arrival rate $\lambda_k$ for three content placement techniques, namely, GCA, CPF, and random caching (RC). In RC, contents stored in clusters are randomly chosen from the file library. The most popular files are cached in the CPF scheme, and the GCA works as illustrated in Algorithm 1.
We can see that the average delay for all content caching strategies increases with $\lambda_k$ since a larger request rate increases the probability of longer waiting time for each request. It is also observed that the CGA, which is locally optimal, achieves significant performance gains over the CPF and RC solutions for the above setup with $N=20$. Fig.~\ref{CPF and GCA} shows that the GCA is superior to the CPF only for small popularity exponent $\beta$. If the popularity exponent $\beta$ is high enough, CPF and GCA will achieve the same performance, since both schemes will cache the most popular files. Additionally, RC fails to reduce the delay as $\beta$ increases, since caching files at random results in a low probability of serving the requested files from local clusters.

\section{Conclusion}
In this work, we propose a novel D2D caching architecture to reduce the network average delay. We study a cellular network where users in the same cluster can exchange cached content with other nearby users via D2D communication; additionally, users in different clusters can cooperate by exchanging their cached content via cellular transmission. We formulate the network average delay minimization problem in terms of cache placement. We study two types of caching policies, namely, caching popular files and greedy caching schemes. By formulating the delay minimization problem as a minimization of a non-increasing supermodular function subject to partition matroid constraints, we show that it can be solved using the proposed GCA scheme within a factor of 2 of the optimum. Numerical results show that the network average delay can be reduced by around 45\% to 80\% by allowing inter-cluster cooperation.
%\appendix
\begin{appendices}
\section{Proof of lemma 1}
We define the ground set that describes the cache placement elements in all clusters as,
\begin{equation}
\mathcal{S} = \{s_1^1, ...,s_k^f, ..., s_k^F, ..., s_K^1, ...,  s_K^F\}
\label{set eqn}
\end{equation}
where $s_k^f$ is an element denoting the placement of file $f$ into the $VCC$ of cluster $k$. This ground set can be partitioned into $K$ disjoint subsets $\{S_1, S_2, ..., S_K\}$, where $S_k =  \{s_k^1, s_k^2, ..., s_k^F\}$ is the set of all files that might be placed in the $VCC$ of cluster $k$.

Let us express the cache placement by the adjacency matrix $\textbf{X} = [x_k,_f ]_{K\times F} \in \{0, 1\}_{K\times F} $. Moreover, we define the corresponding cache placement set $A \subseteq \mathcal{S}$ such that $s_k^f \in A$ if and only if $x_{k,f}=1$. Hence, the constraints on the cache capacity of the $VCC$ of cluster $k \in {\mathcal{K}}$ can be expressed as $A \subseteq \mathcal{S}$, where
\begin{equation}
\mathcal{H} = \{A\subseteq \mathcal{S}: |A\cap S_k| \leq N  \textrm{ for  all}\ k = 1.....K\}
\label{matroid eqn}
\end{equation}
The above expression is derived directly from the constraint that the maximum cache size per cluster is $N$ files, i.e., $\sum_{f=1}^{F}x_k,_f = N$. Comparing $\mathcal{H}$ in (\ref{matroid eqn}) with the definition of partition matroid in (\ref{mat defn eqn}), it is clear that our constraints form a partition matroid with $l = K$ and $k_i = N$. This proves Lemma 1.
%\begin{appendix}
%\end{appendix}

\section{Proof of lemma 2}
We consider two cache placement sets $A$ and  $A'$, where $A \subset A'$. For a certain cluster $k \in \mathcal{K}$, we consider adding the caching element $s_i^f$ $\in \mathcal{S}\setminus A'$ to both placement sets. This means that a file $f$ is added to cluster $i$, where the corresponding cache placement element has not been placed neither in $A$ nor in $A'$. The marginal value of adding an element $s_i^f$ to a set is defined as the change in the file download time after adding this element to the set. The average download time for a file $f$ with mean size $\overline{S}$ is $\frac {\overline{S}}{R_{D}}$, $\frac {\overline{S}}{R_{WL}/N_a}$, or $\frac {\overline{S}}{R_{BH}/N_b}$ if the file is obtained from the local cluster, a randomly chosen remote cluster, or the backhaul, respectively.
%As shown in Section \ref{NaNb}, $N_a$ and $N_b$ are random variables that are functions of the cluster cache size, popularity exponent, and the library size.
For our work, we assume that $\frac{R_{WL}}{{N_a}} > \frac{R_{BH}}{{N_b}}$ always hold. For the sake of simplicity, we replace $\frac{R_{WL}}{{N_a}}$ and $\frac{R_{BH}}{{N_b}}$ with their averages, $\overline{R_{WL}}$ and $\overline{R_{BH}}$, respectively. Now, the aggregate transmission rate assumption is $R_D > \overline{R_{WL}} > \overline{R_{BH}}$.

%\label{optimize eqn}
For $D_k$ in (\ref{T eqn}) to be a supermodular function, the difference in the marginal values between the two sets $A$ and $A'$ must be non-positive.
For a user $u$ belonging to cluster $k$ and requesting content $f \in \mathcal{F}$, we distinguish between these different cases:
%as $\overline{R_{WL}}$, similarly, we denote $\frac{R_{BH}}{{N_b}}$ as $\overline{R_{BH}}$. Now, the aggregate transmission rate assumption is $R_D > \overline{R_{WL}} > \overline{R_{BH}}$.

\begin{enumerate}

  \item According to placement $A'$, user $u$ obtains file $f$ from a remote cluster $j'$, i.e., $s_{j'}^f \in A'$ and $j' \neq k$. In this case, the marginal value with respect to $A'$ is
\begin{align}
G(A' \cup \{s_i^f\}) - G(A') = 0 						%\nonumber
\end{align}
  According to placement $A$, user $u$ obtains file $f$ from a remote cluster $j$, i.e., $s_{j}^f \in A$, again the marginal value is zero. However, if $s_{j}^f\notin A$, the marginal value is given by,
 \begin{align}
G(A \cup \{s_i^f\}) - G(A) = P_{k,f} \Big( \frac {\overline{S}}{\overline{R_{WL}}} -  \frac {\overline{S}}{\overline{R_{BH}}}\Big) %< 0,	%\nonumber
\end{align}
\item In this case, we assume that $s_i^f = s_k^f$, i.e., the requested file $f$ is cached in cluster $k$. According to placement $A'$, user $u$ obtains file $f$ from the local cluster $k$. Hence, the marginal value is given by,
 \begin{align}
G(A' \cup \{s_i^f\}) - G(A') = P_{k,f} \Big( \frac {\overline{S}}{R_{D}} -  \frac {\overline{S}}{\overline{R_{WL}}}\Big)	%< 0,		%\nonumber
\end{align}
According to placement $A$, user $u$ obtains file $f$ from a remote cluster $j$ when $s_{j}^f \in A$, again the marginal value is given by,
  \begin{align}
G(A \cup \{s_i^f\}) - G(A) = P_{k,f} \Big( \frac {\overline{S}}{{R_{D}}} -  \frac {\overline{S}}{\overline{R_{WL}}}\Big)	%< 0,			%\nonumber
\end{align}
However, if $s_{j}^f\notin A$, the marginal value is written as,
 \begin{align}
G(A \cup \{s_i^f\}) - G(A) = P_{k,f} \Big( \frac {\overline{S}}{\overline{R_{D}}} -  \frac {\overline{S}}{\overline{R_{BH}}}\Big) %< 0,		%\nonumber
\end{align}
\end {enumerate}
Accordingly, the difference of marginal values between $A$ and $A'$ in all cases is
\begin{align}
G(A \cup \{s_i^f\}) - G(A) - (G(A' \cup \{s_i^f\}) - G(A')) \leq 0				%\nonumber
\end{align}
It is clear that $f(A) \leq f(A')$ for $A \subseteq A'  \subseteq \mathcal{S}$, or equivalently, $f(A) - f(A') \leq 0$. From the definition of supermodularity, it is clear that the delay per request in cluster $k$ is a supermodular set function. Since the sum of supermodular functions is also supermodular, it is enough to prove that the network average delay $D$ in (\ref{optimize eqn}) is a supermodular function. For the monotone non-increasing property, it is intuitive to see that the delay will never increase by caching new files. Hence, Lemma 2 proved that problem (\ref{optimize eqn}) is a monotonically non-increasing supermodular set function minimized under partition matroid constraints.
\end{appendices}

\bibliographystyle{IEEEtran}
\bibliography{bibliography}

% Generated by IEEEtran.bst, version: 1.14 (2015/08/26)
\begin{thebibliography}{10}
\providecommand{\url}[1]{#1}
\csname url@samestyle\endcsname
\providecommand{\newblock}{\relax}
\providecommand{\bibinfo}[2]{#2}
\providecommand{\BIBentrySTDinterwordspacing}{\spaceskip=0pt\relax}
\providecommand{\BIBentryALTinterwordstretchfactor}{4}
\providecommand{\BIBentryALTinterwordspacing}{\spaceskip=\fontdimen2\font plus
\BIBentryALTinterwordstretchfactor\fontdimen3\font minus
  \fontdimen4\font\relax}
\providecommand{\BIBforeignlanguage}[2]{{%
\expandafter\ifx\csname l@#1\endcsname\relax
\typeout{** WARNING: IEEEtran.bst: No hyphenation pattern has been}%
\typeout{** loaded for the language `#1'. Using the pattern for}%
\typeout{** the default language instead.}%
\else
\language=\csname l@#1\endcsname
\fi
#2}}
\providecommand{\BIBdecl}{\relax}
\BIBdecl

\bibitem{mono}
K.~Shanmugam, N.~Golrezaei, A.~G. Dimakis, A.~F. Molisch, and G.~Caire,
  ``Femtocaching: Wireless content delivery through distributed caching
  helpers,'' \emph{IEEE Transactions on Information Theory}, vol.~59, no.~12,
  pp. 8402--8413, 2013.

\bibitem{wr}
H.~Chen and Y.~Xiao, ``Cache access and replacement for future wireless
  internet,'' \emph{IEEE Communications Magazine}, vol.~44, no.~5, pp.
  113--123, 2006.

\bibitem{femtocell_mehdi}
{{E. Bastug}}, M.~Bennis, M.~Kountouris, and M.~Debbah, ``Edge caching for
  coverage and capacity-aided heterogeneous networks,'' in \emph{2016 IEEE
  International Symposium on Information Theory (ISIT)}, Barcelona, Spain,
  Augest. 2016.

\bibitem{ran}
M.~Tao, E.~Chen, H.~Zhou, and W.~Yu, ``Content-centric sparse multicast
  beamforming for cache-enabled cloud {RAN},'' \emph{IEEE Transactions on
  Wireless Communications}, vol.~15, no.~9, pp. 6118--6131, 2016.

\bibitem{D2D}
M.~Ji, G.~Caire, and A.~F. Molisch, ``Fundamental limits of caching in wireless
  {D2D} networks,'' \emph{IEEE Transactions on Information Theory}, vol.~62,
  no.~2, pp. 849--869, 2016.

\bibitem{D2D1}
N.~Golrezaei, P.~Mansourifard, A.~F. Molisch, and A.~G. Dimakis, ``Base-station
  assisted device-to-device communications for high-throughput wireless video
  networks,'' \emph{IEEE Transactions on Wireless Communications}, vol.~13,
  no.~7, pp. 3665--3676, 2014.

\bibitem{D2D2}
B.~Chen, C.~Yang, and Z.~Xiong, ``Optimal caching and scheduling for
  cache-enabled {D2D} communications,'' \emph{IEEE Communications Letters},
  vol.~PP, no.~99, pp. 1--1, 2017.

\bibitem{D2D3}
B.~Chen, C.~Y. Yang, and G.~Wang, ``High throughput opportunistic cooperative
  device-to-device communications with caching,'' \emph{IEEE Transactions on
  Vehicular Technology}, vol.~PP, no.~99, pp. 1--1, 2017.

\bibitem{multihop}
J.~Iqbal, P.~Giaccone, and C.~Rossi, ``Local cooperative caching policies in
  multi-hop {D2D} networks,'' in \emph{2014 IEEE 10th International Conference
  on Wireless and Mobile Computing, Networking and Communications (WiMob)},
  vol.~1, Larnaca, Cyprus, Oct. 2014, pp. 245--250.

\bibitem{MPSQ}
M.~K. Karray and M.~Jovanovic, ``A queueing theoretic approach to the
  dimensioning of wireless cellular networks serving variable-bit-rate calls,''
  \emph{IEEE Transactions on Vehicular Technology}, vol.~62, no.~6, pp.
  2713--2723, 2013.

\bibitem{zipf}
L.~Breslau, P.~Cao, L.~Fan, G.~Phillips, and S.~Shenker, ``Web caching and
  {Zipf}-like distributions: evidence and implications,'' in \emph{INFOCOM '99.
  Eighteenth Annual Joint Conference of the IEEE Computer and Communications
  Societies. Proceedings. IEEE}, vol.~1, NY, USA, March. 1999, pp. 126--134.

\bibitem{basic_principle}
M.~Ji, G.~Caire, and A.~F. Molisch, ``Wireless device-to-device caching
  networks: Basic principles and system performance,'' \emph{IEEE Journal on
  Selected Areas in Communications}, vol.~34, no.~1, pp. 176--189, 2016.

\bibitem{delayequation}
T.~W.~R. Collings, ``A queueing problem in which customers have different
  service distributions,'' \emph{Appl. Statist.}, vol.~34, no.~1, pp. 75--82,
  1974.

\bibitem{NP-hard}
C.~Wang and M.-S. Chen, ``On the complexity of distributed query
  optimization,'' \emph{IEEE Transactions on Knowledge and Data Engineering},
  vol.~8, no.~4, pp. 650--662, 1996.

\bibitem{multi-cell}
Y.~Sun, Z.~Chen, and H.~Liu, ``Delay analysis and optimization in cache-enabled
  multi-cell cooperative networks,'' in \emph{2016 IEEE Global Communications
  Conference (GLOBECOM)}, Washington DC, USA, Dec. 2016, pp. 1--7.

\bibitem{solnmono2}
G.~Calinescu, C.~Chekuri, M.~Pal, and J.~Vondrak, ``Maximizing a supermodular
  set function subject to a matroid constraint,'' in \emph{12th International
  IPCO Conference}, vol.~1, NY, USA, June. 2007, pp. 182--196.

\end{thebibliography}
\end{document}